\def\01{\{0,1\}}
\newcommand{\ceil}[1]{\lceil{#1}\rceil}
\newcommand{\eps}{\varepsilon}
\newcommand{\Exp}{{\mathbb{E}}}
\newcommand{\inpc}[2]{\langle{#1},{#2}\rangle} 
\newcommand{\Tr}{\mbox{\rm Tr}}
\renewcommand{\Pr}{\mbox{\rm Pr}}
\newcommand{\half}{{\frac{1}{2}}}
\newcommand{\set}[1]{{\left\{#1\right\}}}
\newcommand{\pmset}[1]{\{\pm1\}^{#1}}
\newcommand{\polylog}{\mbox{\rm polylog}}
\newcommand{\probdist}[1]{\mathcal{P}(#1)}
\newcommand{\densop}[1]{\mathcal{B}_+^1(\mathcal{H}_{#1})}
\newcommand{\beq}{\begin{equation}}
\newcommand{\eeq}{\end{equation}}
\newcommand{\beqn}{\begin{equation*}}
\newcommand{\eeqn}{\end{equation*}}
\newcommand{\beqr}{\begin{eqnarray}}
\newcommand{\eeqr}{\end{eqnarray}}
\newcommand{\beqrn}{\begin{eqnarray*}}
\newcommand{\eeqrn}{\end{eqnarray*}}
\newtheorem{definition}{Definition}
\newtheorem{theorem}{Theorem}
\newtheorem{lemma}[theorem]{Lemma}
\newtheorem{corollary}[theorem]{Corollary}
\renewcommand{\qed}{\hfill{\rule{2mm}{2mm}}}
\renewenvironment{proof}[1][]{\begin{trivlist}
\item[\hspace{\labelsep}{\bf\noindent Proof#1:\/}] }{\qed\end{trivlist}}
\begin{document}

\title{Locally Decodable Quantum Codes}
\author{
Jop Bri\"{e}t\thanks{jop.briet@cwi.nl. Partially supported by a Vici grant from the Netherlands Organization for Scientific Research (NWO), and by the European Commission under the Integrated Project Qubit Applications (QAP) funded by the IST directorate
as Contract Number 015848.}\\CWI
\and
Ronald de Wolf\thanks{rdewolf@cwi.nl. Partially supported by a Veni grant from the Netherlands Organization for Scientific Research (NWO), and by the European Commission under the Integrated Project Qubit Applications (QAP) funded by the IST directorate
as Contract Number 015848.}\\CWI}
\date{}
\maketitle

\begin{abstract}
We study a quantum analogue of locally decodable error-correcting codes.
A $q$-query \emph{locally decodable quantum code} encodes $n$ classical bits in an $m$-qubit state,
in such a way that each of the encoded bits can be recovered with high probability by a measurement
on at most $q$ qubits of the quantum code, even if a constant fraction of its qubits have been corrupted adversarially.
We show that such a quantum code can be transformed into a \emph{classical} $q$-query locally decodable code of
the same length that can be decoded well on average (albeit with smaller success probability and noise-tolerance).
This shows, roughly speaking, that $q$-query quantum codes are not significantly better than $q$-query classical codes, at least for constant or small $q$.
\end{abstract}

\section{Introduction}

\subsection{Setting}

Locally decodable codes (LDCs) have received much attention in the last decade.
They are error-correcting codes that encode $n$ bits into $m$ bits, with the usual
error-correcting properties, and the additional feature that any one of
the $n$ encoded bits can be recovered (with high probability) by a randomized
decoder that queries at most $q$ bits in the codeword, for some small $q$.
In other words, to decode small parts of the encoded data, we only need to look
at a small part of the codeword instead of ``unpacking'' the whole thing.
Precise definitions will be given in the next sections.
Such codes are potentially useful in their own right (think of decoding small pieces
from a large encoded library), and also have a variety of applications in complexity
theory and cryptography.  For instance, it is well known that they can be turned into
private information retrieval schemes and vice versa (see the appendix).
For further details about such connections, we refer to Trevisan's survey~\cite{trevisan:eccsurvey} and the references therein.

The most interesting question about LDCs is the tradeoff between
their length $m$ and the number of queries $q$.  The former measures
the space efficiency of the code, while the latter measures the efficiency of decoding.
The larger we make $q$, the smaller we can make $m$.
On one extreme, if we allow $q=\polylog(n)$ queries, the codelength $m$
can be made polynomial in~$n$~\cite{bfls:checking}.
On the other extreme, for $q=1$ and sufficiently large $n$,
LDCs do not exist at all~\cite{katz&trevisan:ldc}. For $q=2$ they do exist
but need exponential length, $m=2^{\Theta(n)}$~\cite{kerenidis&wolf:qldcj}.
Between these two extremes, interesting but hard questions persist.
In particular, we know little about the length of LDCs with constant $q>2$.
The best upper bounds for $q=3$ are Yekhanin's recent construction~\cite{yekhanin:3ldc}:
he gives 3-query LDCs with length $m=2^{O(n^{1/t})}$ for every \emph{Mersenne prime} $p=2^t-1$.
Currently only finitely many Mersenne primes are known (the largest has $t=32582657$),
but it has been conjectured that there are infinitely many.
For $q>3$, shorter codes may be derived by combining Yekhanin's codes with
the recursive constructions of Beimel et al.~\cite{bikr:improvedpir}.
However, all these bounds still have superpolynomial length $m$ for every constant $q$.
On the lower bound side, the best we know for $q>2$ is $m=\Omega\left((n/\log n)^{1+1/(\ceil{q/2}-1)}\right)$
\cite{katz&trevisan:ldc,kerenidis&wolf:qldcj,woodruff:ldclower} (these bounds are stated for fixed 
success probability and noise rate). For instance for $q=3$ and $q=4$, our best lower bounds are slightly
less than $n^2$.

Interestingly, the best known lower bounds were obtained using tools from quantum information theory.
It is thus a natural question to consider also the potential \emph{positive} effects of quantum:
can we construct much shorter $q$-query locally decodable codes by somehow harnessing the 
power of quantum states and quantum algorithms?
There are two natural ways to generalize classical locally decodable codes to the quantum world:
\begin{itemize}
\item We can keep the code classical, but allow $q$ quantum queries.
This means we can query positions of the codeword in quantum superposition,
and process the results using quantum circuits.
This approach was investigated in~\cite{kerenidis&wolf:qldcj}.
A $q$-query quantum decoder can simulate a $2q$-query classical decoder with high success probability,
and this simulation can be made exact if the classical decoder took the parity of its $2q$ bits.
This implies for instance that Yekhanin's 3-query LDC can be decoded
by only 2 quantum queries.  In contrast, we know that every 2-query LDC needs
length $2^{\Theta(n)}$. Allowing quantum queries thus results in very large savings
in $m$ when we consider a fixed number of queries $q$.
\item We can also make the code itself quantum: instead of encoding an $n$-bit $x$
into an $m$-bit string $C(x)$, we could encode it into an $m$-qubit state $Q(x)$.
A $q$-query decoder for such a code would select up to $q$ qubits of the state
$Q(x)$, and make some 2-outcome measurement on those qubits to determine its output.
In this case our notion of noise also needs to be generalized: instead of up to $\delta m$ bitflip-errors,
we now allow any set of up to $\delta m$ qubits of the $m$-qubit state $Q(x)$ to be arbitrarily changed.%
\footnote{While a classical LDC can be reused as often as we want, a quantum code has the problem that a measurement
made to predict one bit changes the state, so predicting another bit based on the changed state
may give the wrong results. However, if the error probability is small then the changes incurred
by each measurement will be small as well, and we can reuse the code many times with reasonable confidence.

Another issue is that more general decoders could be allowed.  For instance, we could consider
allowing any quantum measurement on the $m$-qubit state that can be written as a linear combination of
$m$-qubit Pauli-matrices that have support on at most $q$ positions.  This is potentially stronger
than what we do now (it is an interesting open question whether it is really stronger).
However, we feel this is a somewhat unnatural formalization
of the idea that a measurement should be localized to at most $q$ qubits.
Our current set-up, where we classically select up to $q$ positions and then apply an arbitrary
quantum measurement to those $q$ qubits, seems more natural.}
\end{itemize}

\subsection{Our results}

In this paper we investigate the second kind of code,
which we call a ``$q$-query locally decodable quantum code'', or $q$-query LDQC.
The question is whether the ability to encode our $n$ bits into a \emph{quantum} state enables us
to make codes much shorter.  There are some small examples where quantum encodings achieve things that are impossible for classical encodings.
For example, Ambainis et al.~\cite{antv:densej} give an example of an encoding of 2 classical bits
into 1 qubit, such that each of the bits---though not both simultaneously---can be recovered
from the qubit with success probability~0.85. They even cite an example due to Chuang where 3 bits are encoded
into 1 qubit, and each bit can be recovered with success probability~0.78.
However, they also show that asymptotically large savings are not possible in their setting of random access codes
(explained in Section~\ref{ssecqrac} below). 

Their setting, however, considers neither noise nor local decodability, and hence
does not answer our question about locally decodable codes:
can LDCs be made significantly shorter if we allow quantum instead of classical encodings?
Our main result in this paper is a negative answer to this question:
essentially it says that $q$-query locally decodable quantum codes can be turned
into $q$-query locally decodable classical codes of the same length, with some deterioration in their other parameters.
The precise statement of this result (Corollaries~\ref{ldqc=rldc:cor} and~\ref{ldqc=ldc:cor}) is a little bit dirty.
We obtain a cleaner statement for so-called ``smooth (quantum) codes'', which have the property
that they query the codewords fairly uniformly.
These smooth (quantum) codes can be converted into LD(Q)Cs and vice versa.
For these, the precise statement is as follows (Theorem~\ref{sqc=rsc:thm}).

Suppose we are given a smooth quantum code of $m$ qubits from which we can recover (with success probability at least $1/2+\eps$) 
each bit $x_i$ of the encoded $n$-bit string $x$, while only looking at $q$ qubits of the state. Let $\mu$ be a distribution on the $n$-bit inputs.
Then we can construct a \emph{randomized} classical code $R$ of the same length
(for each $x$, the ``codeword'' is a \emph{distribution} over $m$-bit strings)
from which we can recover each $x_i$ with $\mu$-average success probability at least $1/2+\eps/4^{q+1}$, while
only looking at $q$ bits of the codeword. Thus a $q$-query quantum code is turned into a $q$-query classical code of the same length.
For those who do not like the idea of encoding $x$ into a \emph{distribution} $R(x)$,
we can turn the randomized code $R$ into a \emph{deterministic} code $C$,
where $C(x)$ is a fixed $m$-bit codeword instead of a distribution, at the expense
of correctly decoding only a constant fraction of all indices $i$ instead of all $n$ of them (Corollary~\ref{sqc=sc:cor}).

Since all known lower bounds on LDCs also apply to randomized classical codes that work well under 
a uniform distribution $\mu$ on the $n$-bit strings, those lower bounds immediately carry over to LDQCs.  
In particular we obtain as corollaries of our result:
\begin{itemize}
\item For sufficiently large $n$, 1-query LDQCs do not exist for any length $m$ (from \cite{katz&trevisan:ldc}).%
\footnote{Actually, this result can more easily be shown directly, by combining Katz and Trevisan's proof for classical codes
with the quantum random access code lower bound mentioned below in Section~\ref{ssecqrac}.}
\item 2-query LDQCs need length $m=2^{\Theta(n)}$ (from \cite{kerenidis&wolf:qldcj}).
\item For every constant $q$, $q$-query LDQCs need length $m=\Omega\left((n/\log n)^{1+1/(\ceil{q/2}-1)}\right)$ (from~\cite{kerenidis&wolf:qldcj}).
\end{itemize}
\paragraph{Techniques.}
Our main technique is to apply to the $m$ qubits of the quantum code a randomly selected sequence of $m$ Pauli measurements.
The randomized ``codeword'' $R(x)$ will be the probability distribution on $m$-bit outcomes 
that results from applying such a measurement to the quantum state $Q(x)$.
The main part of our proof is to show that there exists a choice of Pauli measurements that roughly preserves correct 
decodability for all indices $i$.

\section{Preliminaries}

We write $[n]$ for the set $\set{1,\dots,n}$. 
We use $\probdist{S}$ to denote the set of all probability distributions (or random variables) on set $S$.
If $z$ is distributed according to the distribution of a random variable $Z$, we write $z\sim Z$. 
We will use this when taking probabilities $\Pr_{z\sim Z}$ or expectations $\Exp_{z\sim Z}$.
Probabilities and expectations with a subscript `$i\in S$' should be read as taken over a uniformly random $i\in S$. 
Below we give a brief overview of the concepts of quantum mechanics used here, see~\cite{nielsen&chuang:qc,preskill:notes} for more extensive introductions.

\paragraph{Quantum states.}
In quantum mechanics, a physical system is mathematically represented by a complex Hilbert space. A $d$-dimensional complex Hilbert space consists of all $d$-dimensional vectors with complex entries, endowed with the standard inner product. The \emph{state} of a physical system is in turn represented by a density operator (a positive semidefinite linear operator with trace~1) acting on a Hilbert space. 
We use $\densop{d}$ to denote the set of all density operators on a $d$-dimensional complex Hilbert space.
States in two-dimensional Hilbert spaces are called \emph{qubits}. Density operators of rank~1 are called \emph{pure states}.

\paragraph{Measurements.}
Information about the state of a physical system can only be obtained by doing a \emph{measurement}. The most general $k$-outcome measurement can be defined as a set $\set{A_1,\dots,A_k}$ of $k$ positive semidefinite matrices that satisfy $\sum_{i=1}^k A_i=I$. The probability that the measurement of a system in a state $\rho$ yields the $i$'th outcome  is $\Tr(A_i\rho)$. Hence, the measurement yields a random variable $A(\rho)$ with ${\Pr[A(\rho)=i] = \Tr(A_i\rho)}$.
With a measurement that has outcomes $+1$ and $-1$ (and corresponding operators $A^+$ and $A^-$) we associate an operator $A = A^+ - A^-$. 
The expected value of this measurement on a state $\rho$ is then $\Tr(A\rho)$. Note that this equals the difference between
the probabilities of outcomes $+1$ and $-1$, respectively.

\paragraph{Pauli matrices.}
The one-qubit Pauli operators are given by 
$$
I = \left(\begin{array}{rr}1 & 0\\ 0&1\end{array}\right), \
X = \left(\begin{array}{rr}0&1\\ 1&0\end{array}\right), \
Y = \left(\begin{array}{rr}0&-i\\i&0\end{array}\right), \mbox{ and }
Z = \left(\begin{array}{rr}1&0\\ 0&-1\end{array}\right).
$$ 
For integer $k\geq 1$, the set of $k$-qubit Pauli operators is $\mathcal{P}_k:=\set{I,X,Y,Z}^{\otimes k}$. 
These $4^k$ matrices form an orthonormal basis for the space of all $2^k\times 2^k$ complex matrices endowed with the
inner product $\inpc{A}{B}=\frac{1}{2^k}\Tr(A^\dagger B)$.
Each Pauli operator $S\in\mathcal{P}_k$ has a unique decomposition $S = S^+ - S^-$, with $S^+$ and $S^-$ orthogonal projectors that satisfy $S^+ + S^- = I$. For this reason we associate a unique two-outcome measurement $\{S^+,S^-\}$ with each such $S$. A Pauli measurement $S\in\mathcal{P}_k$ of a $k$-qubit state $\rho$ yields a $\pm 1$-valued random variable $S(\rho)$ with expected value $\Tr(S\rho)$.
However, we can also view $S\in\mathcal{P}_k$ as $k$ separate one-qubit Pauli measurements, to be applied to the $k$ qubits of the state, respectively.
When viewed in this way, the result of measuring $\rho$ is an $k$-bit random variable, i.e., a probability distribution on $\pmset{k}$.
The product of those $k$ bits equals the $\pm 1$-valued random variable $S(\rho)$ mentioned before.

\paragraph{Super-operators.}
A super-operator is a mathematical representation of the most general transformation of a quantum state allowed by the laws of quantum mechanics. A super-operator $\mathcal{E}$ can be defined by a finite set $\set{E_1,\dots,E_k}$ of linear operators (known as Kraus operators) that satisfy $\sum_{i=1}^k E_i^{\dagger}E_i=I$. The corresponding operation on a state $\rho$  yields another density operator, $\mathcal{E}(\rho)=\sum_{i=1}^k E_i\rho E_i^{\dagger}$.  This $\mathcal{E}(\rho)$ may act on a Hilbert space of a possibly different dimension, though we will not need that here. We say that $\mathcal{E}$ ``acts trivially'' on, say, the first
qubit of the state if all its Kraus operators have the form $E_i=I\otimes E'_i$ for some $E'_i$ acting on all but the first qubit.

\section{Codes}

It will be convenient to write bits as $\pm1$ instead of 0/1.
With this convention, if random variable $A\in\{\pm 1\}$ predicts bit $x_i\in\{\pm 1\}$, we can write the bias of this prediction as an expectation: $\Exp[A\cdot x_i]=\Pr[A=x_i]-\Pr[A\neq x_i]$.  Note that $\Pr[A=x_i]\geq 1/2+\eps$ iff $\Exp[A\cdot x_i]\geq 2\eps$.

\subsection{Classical codes}\label{clascodes:section}

We start with classical codes.
The formal definition of a locally decodable code is as follows.
It involves a decoder $\mathcal{A}$ that receives input $i\in[n]$ and oracle access to a string $y\in\pmset{m}$,
usually written as a superscript to $\mathcal{A}$.
This $y$ will be a codeword $C(x)\in\pmset{m}$ corrupted by some ``error string'' $E\in\pmset{m}$, which negates some of the bits of $C(x)$
(below, $C(x)\circ E$ denotes the entry-wise product of the two $m$-bit vectors $C(x)$ and $E$).
The oracle ``queries index $j\in[m]$'' if it reads the $j$'th bit of~$y$.
We use $\mathcal{A}^{y}(i)$ to denote the $\pm1$-valued random variable that is the algorithm's output.

\begin{definition}[Locally decodable code]\label{ldc:def}
A function $C: \pmset{n}\to\pmset{m}$ is a $(q,\delta,\eps)$-\emph{locally decodable code} if there exists a probabilistic oracle algorithm $\mathcal{A}$ such that
\begin{enumerate}
\item For every $x\in\pmset{n}$, every $i\in[n]$, and every $E\in\pmset{m}$ with at most $\delta m$ $-1$'s, we have $\Pr[\mathcal{A}^{C(x)\circ E}(i)=x_i]\geq 1/2+\eps$, where the probability is taken over the internal coin tosses of~$\mathcal{A}$.
\item $\mathcal{A}$ queries at most $q$ indices of $y$. 
Queries are made non-adaptively, meaning that the indices to be queried are all selected before the querying starts.
\end{enumerate}
An algorithm $\mathcal{A}$ satisfying the above is called a $(q,\delta,\eps)$-\emph{local decoder} for~$C$.
\end{definition}

Since any $\delta m$ indices can be corrupted, a local decoder must query the indices fairly uniformly. Otherwise, an adversary could choose to corrupt the most queried part of the code and ruin the decoder's success probability. Motivated by this property, Katz and Trevisan~\cite{katz&trevisan:ldc} defined a variation of a locally decodable code called a \emph{smooth code}, defined only for uncorrupted codewords.

\begin{definition}[Smooth code]
A function $C:\pmset{n}\to\pmset{m}$ is a $(q,c,\eps)$-\emph{smooth code} if there exists a probabilistic oracle algorithm $\mathcal{A}$ such that:
\begin{enumerate}
\item For every $x\in\pmset{n}$ and $i\in[n]$, we have $\Pr[\mathcal{A}^{C(x)}(i)=x_i]\geq 1/2 + \eps$.
\item For every $i\in[n]$ and $j\in[m]$, we have $\Pr[\mathcal{A}^{(\cdot)}(i) \text{ queries index } j]\leq c/m$.
\item $\mathcal{A}$ queries at most $q$ indices (non-adaptively).
\end{enumerate}
An algorithm satisfying the above is called a $(q,c,\eps)$-smooth decoder for $C$.
\end{definition}

Katz and Trevisan showed that LDCs and smooth codes are essentially equivalent, in the sense that a decoder for one can be transformed into a decoder for the other. We will prove the same for quantum codes in Section~\ref{qcodes:section}, using essentially their proof.

\subsection{Randomized codes}\label{randcodes:section}

Here we define our first generalization, incorporating randomness into the definition of the code.
A \emph{randomized locally decodable code} (randomized LDC) maps $\pmset{n}$ to \emph{random variables} over $\pmset{m}$ (rather than fixed codewords), such that any $x_i$ can be decoded well using a constant number of queries, even if up to $\delta m$ indices are corrupted. The formal definition is as follows.

\begin{definition}[Randomized locally decodable code]\label{rldcode}
A function $R:\pmset{n}\to\probdist{\pmset{m}}$ is a \emph{$(q,c,\eps)$-randomized locally decodable code} if there exists a probabilistic oracle algorithm $\mathcal{A}$ such that:
\begin{enumerate}
\item For every $x\in\pmset{n}$, every $i\in[n]$, and every $E\in\pmset{m}$ with at most $\delta m$ $-1$'s, we have $\Pr[\mathcal{A}^{R(x)\circ E}(i)=x_i]]\geq 1/2+\eps$, where the probability is taken over the internal coin tosses of~$\mathcal{A}$ as well as the distribution $R(x)$.
\item $\mathcal{A}$ queries at most $q$ indices (non-adaptively).
\end{enumerate}
An algorithm $\mathcal{A}$ satisfying the above is called a $(q,\delta,\eps)$-\emph{randomized local decoder} for~$R$.
\end{definition}

Similarly, we define a \emph{randomized smooth code}:

\begin{definition}[Randomized smooth code]\label{rscode}
A function $R:\pmset{n}\to\probdist{\pmset{m}}$ is a \emph{$(q,c,\eps)$-randomized smooth code} if there exists a probabilistic oracle algorithm $\mathcal{A}$ such that:
\begin{enumerate}
\item For every $x\in\pmset{n}$ and every $i\in[n]$, we have $\Pr[\mathcal{A}^{R(x)}(i) = x_i]\geq 1/2 + \eps$.
\item For every $i\in[n]$, and every $j\in[m]$, we have $\Pr[\mathcal{A}^{(\cdot)}(i) \text{ queries index } j]\leq c/m$.
\item $\mathcal{A}$ queries at most $q$ indices (non-adaptively).
\end{enumerate}
An algorithm $\mathcal{A}$ satisfying the above is called a $(q,c,\eps)$-randomized smooth decoder for~$R$.
\end{definition}

It will be convenient to also have a version of these codes that are only required to work well on average, instead of for all $x$:

\begin{definition}[$\mu$-average codes]\label{avrscode}
Let $\mu$ be a distribution on $\pmset{n}$.
A function $C:\pmset{n}\to\pmset{m}$ is a \emph{$\mu$-average $(q,\delta,\eps)$-locally decodable code} if Definition~\ref{ldc:def} holds with the first clause replaced by:
\begin{enumerate}
\item For every $i\in[n]$ and $E\in\pmset{m}$ with at most $\delta m$ $-1$'s,
$\Pr_{x\sim\mu}[\mathcal{A}^{C(x)\circ E}(i)=x_i]\geq \frac{1}{2}+\eps$.
\end{enumerate}
Analogously, we define $\mu$-average versions of smooth codes, randomized LDCs, and randomized smooth codes.
For these codes, we assume without loss of generality that for each $i$ and queried set $r\subseteq[m]$, the decoder $\mathcal{A}$ always uses
the same function $f_{i,r}:\pmset{q}\rightarrow\{\pm 1\}$ to determine its output.
\end{definition}

A $\mu$-average randomized smooth code can actually be ``derandomized'' to a $\mu$-average smooth code on a smaller number of bits:

\begin{lemma}\label{rsc=sc:lem}
Let $R:\pmset{n}\to\probdist{\pmset{m}}$ be a $\mu$-average $(q,c,\eps)$-randomized smooth code. Then there exists a $\mu$-average $(q,c,\eps/2)$-smooth code $C:\pmset{n}\to\pmset{m}$ for at least $\eps n$ of the indices $i$ (that is, a smooth code with $\mu$-success probability at least $1/2 + \eps/2$ for at least $\eps n$ of the $n$ indices).
\end{lemma}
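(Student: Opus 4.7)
The plan is a derandomization by the probabilistic method, followed by an averaging argument over the indices $i$.

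First I would build a random deterministic code from $R$: for each $x \in \pmset{n}$, independently sample $C(x) \sim R(x)$, and keep the same decoder $\mathcal{A}$ as in the definition of the randomized smooth code. Since the decoder's query distribution is determined by $\mathcal{A}$'s own coin tosses (not by the code), the smoothness property (clause 2) and the bound on the number of queries (clause 3) are automatically inherited from $R$ for any outcome of the sampling; only the success-probability clause needs to be controlled.

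For a fixed deterministic code $C$ and index $i$, let
\[
P_i(C) \;=\; \Pr_{x\sim \mu,\,r}\!\left[\mathcal{A}^{C(x)}(i) = x_i\right],
\]
where $r$ denotes $\mathcal{A}$'s internal randomness. By linearity of expectation over the random sampling $C(x)\sim R(x)$, we get $\Exp_C[P_i(C)] \geq 1/2 + \eps$ for every $i$, and hence $\Exp_C\!\left[\tfrac{1}{n}\sum_i P_i(C)\right]\geq 1/2+\eps$. Therefore there exists a fixing of $C$ for which $\tfrac{1}{n}\sum_i P_i(C) \geq 1/2+\eps$. Fix such a~$C$.

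The remaining step is a counting argument: let $N$ be the number of indices $i$ with $P_i(C) \geq 1/2+\eps/2$. Bounding the average trivially by~$1$ on these indices and by~$1/2+\eps/2$ on the rest gives
\[
\tfrac{1}{2}+\eps \;\leq\; \tfrac{N}{n}\cdot 1 + \left(1-\tfrac{N}{n}\right)\!\left(\tfrac{1}{2}+\tfrac{\eps}{2}\right),
\]
which rearranges to $N/n \geq \eps/(1-\eps) \geq \eps$. So at least $\eps n$ indices admit $\mu$-success probability at least $1/2+\eps/2$, and on these indices $\mathcal{A}$ (with the fixed $C$) is the desired $\mu$-average $(q,c,\eps/2)$-smooth decoder.

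There is no serious obstacle: the whole argument is elementary probabilistic method. The one thing to check carefully is that smoothness does not depend on $x$ or on the realization of the sampling, which is precisely because the query distribution in Definitions~\ref{rscode} and the smooth code definition is a property of the decoder alone. Note also that we may allow $\mathcal{A}$ to output an arbitrary default value on the ``bad'' indices without affecting anything, since the lemma only asks for success on at least $\eps n$ indices.
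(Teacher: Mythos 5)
Your proof is correct and is essentially the paper's own argument: derandomize by an averaging (probabilistic-method) step and then extract the $\eps n$ good indices by the same counting inequality $N/n \geq \eps/(1-\eps) \geq \eps$, with smoothness inherited for free because the non-adaptive decoder's query distribution does not depend on the oracle. The only cosmetic difference is that the paper writes $R$ as a function $R(x,z)$ of shared randomness and fixes a single good $z$ (setting $C(\cdot)=R(\cdot,z)$), counting good indices per $z$ before averaging over $z$, whereas you sample each $C(x)\sim R(x)$ and count after fixing $C$; since only the marginal distributions $R(x)$ enter the expectations, the two orderings are interchangeable.
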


\begin{proof}
As a first step we will view $R$ as a function to strings: 
there exists a random variable $Z$ (over some possibly infinite set $\mathcal{Z}$) 
and a function $R:\pmset{n}\times Z\to\pmset{m}$ such that
for every $x\in\pmset{n}$, the random variables $R(x,Z)$ and $R(x)$ are the same. 
A decoder $\mathcal{A}$ for $R$ also works for $R(\cdot,Z)$, so 
we have bias $\Exp_{x\sim\mu,z\sim Z}[\mathcal{A}^{R(x,z)}(i)\cdot x_i]\geq 2\eps$ for every $i\in[n]$.
For every $i\in[n]$ and $z\in\mathcal{Z}$, define variables $X_{i,z}\in\01$, with 
$$
X_{i,z}=1\Longleftrightarrow\Exp_{x\sim\mu}[\mathcal{A}^{R(x,z)}(i) \cdot x_i]\geq\eps,
$$ 
and $X_z :=\sum_{i=1}^n X_{i,z}$. Using the definition of a $\mu$-average randomized smooth code, we have
\beqrn
2\eps n &\leq& \sum_{i=1}^n\Exp_{x\sim\mu,z\sim Z}[\mathcal{A}^{R(x,z)}(i)\cdot x_i]\\
&=& \Exp_{z\sim Z}\left[\sum_{i=1}^n\Exp_{x\sim\mu}[\mathcal{A}^{R(x,z)}(i)\cdot x_i]\right]\\
&<& \Exp_{z\sim Z}[X_z + (n-X_z)\eps]\\
&=& \eps n + (1-\eps)\Exp_{z\sim Z}[X_z].
\eeqrn
Hence $\Exp_{z\sim Z}[X_z]\geq\eps n$. 
Thus there exists a $z\in\mathcal{Z}$ such that for at least $\eps n$ of the $n$ indices $i$, we have
$\Exp_{x\sim\mu}[\mathcal{A}^{R(x,z)}(i) \cdot x_i]\geq\eps$,
equivalently, $\Exp_{x\sim \mu}[\Pr[\mathcal{A}^{R(x,z)}(i) = x_i]]\geq 1/2 + \eps/2$. 
Defining the code $C(\cdot):=R(\cdot,z)$ gives the lemma.
\end{proof}

\subsection{Quantum codes}\label{qcodes:section}

Our second level of generalization brings quantum mechanics into the picture: now our code maps classical $n$-bit strings to $m$-qubit quantum states. Decoding of these codes requires algorithms that use both quantum measurements and properties of classical probabilistic oracle algorithms. Below, with ``quantum oracle algorithm'' we mean an algorithm $\mathcal{A}$ with oracle access to an $m$-qubit state~$\rho$,
which is written as a superscript.  This $\rho$ could be a corrupted version of an $m$-qubit ``codeword'' $Q(x)$, 
obtained by applying some super-operator $\mathcal{E}$ to $Q(x)$. 
This $\mathcal{E}$ should only affect a $\delta$-fraction of the $m$ qubits.
This way of modelling the error generalizes the classical case: 
a classical error pattern $E\in\pmset{m}$ corresponds to a super-operator $\mathcal{E}$ that applies an $X$ to the qubits at 
positions where $E$ has a $-1$, and $I$ to the positions where $E$ has a $+1$.
On input $i\in[n]$, the algorithm probabilistically selects a set $r\subseteq[m]$ of at most $q$ indices of qubits
of this state, and applies a two-outcome measurement to the selected qubits with operators $A^+_{i,r}$ and $A^-_{i,r}$.
As before, we will use ``$\mathcal{A}^{\rho}(i)$'' to denote the $\pm 1$-valued random variable that is the output.
We say that ``$\mathcal{A}$ queries $r$'', and ``$\mathcal{A}$ queries index $j$'' if $j$ is in $r$.
Note that such algorithms are non-adaptive by definition: they first select the qubits in $r$, and then apply one measurement to those qubits.

We now define a \emph{locally decodable quantum code} (LDQC) as follows:

\begin{definition} [Locally decodable quantum code] 
A function $Q:\{\pm 1\}^n\to\densop{2^m}$ is a $(q,\delta,\eps)$-\emph{locally decodable quantum code} if there exists a quantum oracle algorithm $\mathcal{A}$ such that: 
\begin{enumerate}
\item For every $x\in\pmset{n}$, every $i\in[n]$, and every super-operator $\mathcal{E}$ that acts non-trivially on at most $\delta m$ qubits, we have $\Pr[\mathcal{A}^{\mathcal{E}(Q(x))}(i)=x_i]\geq 1/2 + \eps$, where the probability is taken over the coin tosses and measurements in $\mathcal{A}$.
\item $\mathcal{A}$ queries at most $q$ indices (non-adaptively).
\end{enumerate}
An algorithm $\mathcal{A}$ satisfying the above requirements is called a $(q,\delta,\eps)$-local quantum decoder for $Q$.
\end{definition}

LDQCs generalize randomized LDCs, because probability distributions are just diagonal density operators.
Similarly, we can establish a smoothness property also for quantum codes:

\begin{definition}[Smooth quantum code]
A function $Q:\{\pm 1\}^n\to\densop{2^m}$ is a $(q,c,\eps)$-\emph{smooth quantum code} if there exists a quantum oracle algorithm $\mathcal{A}$ such that:
\begin{enumerate}
\item For every $x\in\{\pm 1\}^n$ and every $i\in[n]$, we have $\Pr[\mathcal{A}^{Q(x)}(i)=x_i]\geq 1/2 + \eps$.
\item For every $i\in[n]$ and every $j\in[m]$, we have $\Pr[\mathcal{A}^{(\cdot)}(i) \text{ queries index j }] \leq c/m$.
\item $\mathcal{A}$ queries at most $q$ indices (non-adaptively).
\end{enumerate}
An algorithm $\mathcal{A}$ satisfying the above is called a $(q,c,\eps)$-smooth quantum decoder for $Q$. 
\end{definition}

As Katz and Trevisan~\cite{katz&trevisan:ldc} did for classical LDCs, we can establish a strong connection between LDQCs and smooth quantum codes. Either one can be used as the other, as the next theorems show. Analogues of these theorems also hold between randomized LDCs and randomized smooth codes, and between the $\mu$-average versions of these codes.

\begin{theorem}\label{sqctoldqc}
Let $Q:\pmset{n}\to\densop{2^m}$ be a $(q,c,\eps)$-smooth quantum code. Then, as long as  $\delta \leq \eps/c$, we have that $Q$ is also a $(q,\delta,\eps - \delta c)$-locally decodable quantum code.
\end{theorem}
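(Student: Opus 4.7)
The plan is to use the smooth decoder $\mathcal{A}$ itself as a candidate local quantum decoder for $Q$, and argue that adversarial corruption affecting only a $\delta$-fraction of the qubits can hurt the bias by at most $2\delta c$.

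First I would isolate the only quantum fact we need: if $\mathcal{E}$ acts trivially on every qubit outside some set $T\subseteq[m]$ of size at most $\delta m$, then for any query set $r\subseteq T$ the reduced state $\Tr_{[m]\setminus r}(\mathcal{E}(Q(x)))$ equals $\Tr_{[m]\setminus r}(Q(x))$. This is because partial trace over the complement of $r$ absorbs any super-operator supported on that complement (the Kraus operators factor as $E_k=I_r\otimes E'_k$ and $\sum_k E'^{\dagger}_k E'_k=I$), so the measurement outcome distribution from $\{A^+_{i,r},A^-_{i,r}\}$ is identical on $\mathcal{E}(Q(x))$ and on $Q(x)$ whenever $r$ avoids the corrupted positions.

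Next I would put this together with the smoothness condition. Write $r$ for the (random) set of queried indices on input $i$, and let $B_{i,r}:=\Exp[\mathcal{A}^{Q(x)}(i)\cdot x_i\mid \text{queries }r]$ and $B'_{i,r}$ the analogous quantity with $Q(x)$ replaced by $\mathcal{E}(Q(x))$; both lie in $[-1,1]$. By smoothness, $\Pr[r\not\subseteq T]\leq\sum_{j\notin T}\Pr[j\in r]\leq\delta m\cdot c/m=\delta c$. Combining this with the observation above (which gives $B_{i,r}=B'_{i,r}$ whenever $r\subseteq T$),
\beqrn
\Exp[\mathcal{A}^{\mathcal{E}(Q(x))}(i)\cdot x_i] &=& \Exp_r[B'_{i,r}]\\
&=& \Exp_r[B_{i,r}]+\Exp_r[(B'_{i,r}-B_{i,r})\,\mathbf{1}[r\not\subseteq T]]\\
&\geq& 2\eps - 2\delta c.
\eeqrn
Converting this bias lower bound back to success probability yields $\Pr[\mathcal{A}^{\mathcal{E}(Q(x))}(i)=x_i]\geq 1/2+\eps-\delta c$, and the hypothesis $\delta\leq\eps/c$ guarantees this is nontrivial. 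The query-budget and non-adaptivity conditions are inherited directly from the smooth decoder, so $\mathcal{A}$ is a $(q,\delta,\eps-\delta c)$-local quantum decoder for $Q$.

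The only step that requires any care is the first one — making precise that a super-operator acting trivially outside $r$ really leaves the reduced state on $r$ invariant — but this is just the statement that partial trace commutes with (and is invariant under) CPTP maps on the traced-out subsystem. Everything else is a union bound and a two-line bias calculation.
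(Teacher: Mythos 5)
Your proposal is correct and follows essentially the same route as the paper: you run the smooth decoder unchanged, use smoothness plus a union bound to show the probability of querying any corrupted qubit is at most $\delta c$, and note that conditioned on avoiding the corruption the outcome distribution is identical (you make the partial-trace/Kraus fact explicit, which the paper leaves implicit). The only blemish is a notational slip with $T$: in your first paragraph $T$ is the corrupted set, so the invariance should be stated for $r\cap T=\emptyset$ rather than $r\subseteq T$, while your second paragraph implicitly treats $T$ as the set of uncorrupted positions---the argument is fine once the two are made consistent.
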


\begin{proof}
Let $\mathcal{A}$ be a $(q,c,\eps)$-smooth quantum decoder for $Q$. 
Suppose we run it on $\mathcal{E}(Q(x))$ with at most $\delta m$ corrupted qubits.
The probability that $\mathcal{A}$ queries a specific qubit is at most $c/m$.
Then by the union bound, the probability that $\mathcal{A}$ queries any of the corrupted qubits is at most $\delta m c/m=\delta c$. 
Hence $\mathcal{A}$ itself is also a $(q,\delta,\eps-\delta c)$-local quantum decoder for $Q$.
\end{proof}

\begin{theorem}\label{ldqctosqc}
Let $Q:\pmset{n}\to\densop{2^m}$ be a $(q,\delta,\eps)$-locally decodable quantum code. Then $Q$ is also a $(q,q/\delta,\eps)$-smooth quantum code.
\end{theorem}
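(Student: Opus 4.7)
The plan is to adapt the Katz--Trevisan reduction from LDCs to smooth codes, relying on the fact that any local decoder queries each position ``on average'' very little, and that the LDQC property is strong enough to tolerate replacing a small set of qubits by arbitrary fixed states.

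Fix a $(q,\delta,\eps)$-local quantum decoder $\mathcal{A}$ for $Q$. For each input $i\in[n]$ and position $j\in[m]$, let $p_{i,j}=\Pr[\mathcal{A}^{(\cdot)}(i)\text{ queries }j]$, and call $j$ \emph{heavy for $i$} if $p_{i,j}>c/m$ where $c=q/\delta$. Since $\mathcal{A}$ queries at most $q$ indices per run, $\sum_{j=1}^m p_{i,j}\leq q$, so the set $H_i$ of heavy positions has $|H_i|\leq qm/c=\delta m$.

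Next I would build the smooth decoder $\mathcal{A}'$ as follows: on input $i$, run $\mathcal{A}(i)$ to obtain its (random) set $r$ of up to $q$ positions, but replace every position in $r\cap H_i$ by an ancilla qubit in some fixed state, say $\ket{0}\bra{0}$, before applying $\mathcal{A}$'s measurement. Equivalently, $\mathcal{A}'$ actually queries only $r\setminus H_i$ on the genuine state, so for every $j\in H_i$ we have $\Pr[\mathcal{A}'(i)\text{ queries }j]=0$, while for $j\notin H_i$ the query probability is unchanged and hence at most $c/m$. The query count is still at most $q$. So properties 2 and 3 of a smooth quantum decoder hold with $c=q/\delta$.

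It remains to verify correctness. By construction, the distribution of $\mathcal{A}'^{Q(x)}(i)$ equals that of $\mathcal{A}^{\mathcal{E}_i(Q(x))}(i)$, where $\mathcal{E}_i$ is the super-operator that discards the qubits in $H_i$ and replaces them by $\ket{0}\bra{0}$. Since $|H_i|\leq\delta m$, this $\mathcal{E}_i$ acts non-trivially on at most $\delta m$ qubits, so the LDQC property of $Q$ gives $\Pr[\mathcal{A}^{\mathcal{E}_i(Q(x))}(i)=x_i]\geq 1/2+\eps$, yielding property 1 for $\mathcal{A}'$.

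The only subtlety worth checking carefully — and the one place where the argument is slightly more delicate than in the classical case — is that the ``replace a qubit by a fixed ancilla'' operation is a legitimate super-operator whose Kraus operators are supported only on positions in $H_i$ (so that it genuinely acts trivially on the complement, as the definition in Section 2 requires). This is routine: the replacement channel on a single qubit has Kraus operators $\ket{0}\bra{0},\ket{0}\bra{1}$, and tensoring with $I$ on the remaining qubits gives the required form. With that verified, $\mathcal{A}'$ is the desired $(q,q/\delta,\eps)$-smooth quantum decoder for $Q$.
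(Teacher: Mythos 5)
Your proof is correct and follows essentially the same route as the paper: prune the heavy set $H_i$ of size at most $\delta m$, have the decoder simulate $\mathcal{A}$ with those qubits replaced by a fixed state, and invoke the LDQC guarantee against the corresponding super-operator supported on $H_i$. The only (immaterial) difference is that you substitute $\ket{0}\bra{0}$ for the heavy qubits where the paper uses the completely mixed state; both are valid channels acting non-trivially only on $H_i$, so the argument goes through identically.
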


\begin{proof}
Let $\mathcal{A}$ be a $(q,\delta,\eps)$-local quantum decoder for $Q$. For each $i\in[n]$, let $p_i(j)$ be the probability that on input~$i$, $\mathcal{A}$ queries qubit $j$. Let $H_i=\set{j\mid p_i(j)>q/(\delta m)}$. Then $|H_i|\leq\delta m$, because $\mathcal{A}$ queries no more than $q$ indices.
Let $\mathcal{B}$ be the quantum decoder that simulates $\mathcal{A}$, except that on input $i$ it does not query qubits in $H_i$, but instead acts as if those qubits are in a completely mixed state.
Then $\mathcal{B}$ does not measure any qubit $j$ with probability greater than $q/(\delta m)$.
Also, $\mathcal{B}$'s behavior on input $i$ and $Q(x)$ is the same as $\mathcal{A}$'s behavior on input $i$ and $\mathcal{E}(Q(x))$ that is obtained by replacing all qubits in $H_i$ by completely mixed states. Since $\mathcal{E}$ acts non-trivially on at most $|H_i|\leq\delta m$ qubits, we have $\Pr[\mathcal{B}^{Q(x)}(i)=x_i] = \Pr[\mathcal{A}^{\mathcal{E}(Q(x))}(i)=x_i]\geq 1/2 + \eps$.
\end{proof}

\subsection{A weak lower bound from random access codes}\label{ssecqrac}
We can immediately establish a weak lower bound on the length of LDQCs and smooth quantum codes by considering a \emph{quantum random access code} (QRAC)~\cite{antv:densej}, which generalizes both.

\begin{definition}[Quantum random access code]
A function $Q: \pmset{n}\to \densop{2^m}$ is an $(n,m,\eps)$-\emph{quantum random access code} if there exists a quantum oracle algorithm $\mathcal{A}$ such that for every $x\in\pmset{n}$ and $i\in [n]$, $\Pr[\mathcal{A}^{Q(x)}(i)=x_i] \geq 1/2 + \eps$. An algorithm $\mathcal{A}$ that satisfies this is called a quantum random access decoder for $Q$.
\end{definition}

LDQCs and smooth quantum codes are QRACs with some additional properties, such as constraints on the way the qubits of the codeword are accessed. Hence the following well-known lower bound on the length of QRACs also holds for them.

\begin{theorem}[ANTV~\cite{antv:densej,nayak:qfa}]\label{qracbound}
Every $(n,m,\eps)$-QRAC satisfies $m\geq (1-H(1/2 + \eps))n$.
\end{theorem}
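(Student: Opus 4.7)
The plan is a standard Fano-plus-Holevo information-theoretic argument. Draw $X=X_1\cdots X_n$ uniformly at random from $\pmset{n}$, so $Q(X)$ is a random $m$-qubit state correlated with the classical variable $X$. Write $Y_i:=\mathcal{A}^{Q(X)}(i)$ for the decoder's $\pm 1$-valued prediction of $X_i$; by assumption $\Pr[Y_i=X_i]\geq 1/2+\eps$ for every $i$.

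Step one (per-bit lower bound via Fano). Since $X_i$ is a uniform bit and $\Pr[X_i\neq Y_i]\leq 1/2-\eps$, Fano's inequality for a binary target gives $H(X_i\mid Y_i)\leq H(1/2-\eps)=H(1/2+\eps)$, so
\[
I(X_i:Y_i)\;=\;H(X_i)-H(X_i\mid Y_i)\;\geq\;1-H(1/2+\eps).
\]
Because $Y_i$ is obtained by a quantum measurement (followed by classical post-processing) of $Q(X)$, the quantum data-processing inequality gives $I(X_i:Q(X))\geq I(X_i:Y_i)\geq 1-H(1/2+\eps)$, where $I(X_i:Q(X))$ denotes the classical-quantum mutual information.

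Step two (aggregation via chain rule and Holevo). Independence of the $X_i$ under the uniform distribution is the key ingredient: for each $i$, $X_i$ is independent of $X_{<i}:=X_1\cdots X_{i-1}$, so data processing (dropping $X_{<i}$) combined with this independence yields
\[
I(X_i:Q(X))\;\leq\;I(X_i:Q(X),X_{<i})\;=\;I(X_i:Q(X)\mid X_{<i}).
\]
Summing over $i$ and applying the chain rule for classical-quantum mutual information gives $\sum_i I(X_i:Q(X))\leq I(X:Q(X))$. Finally the Holevo bound $I(X:Q(X))\leq S(Q(X))\leq m$, which uses only that $Q(X)$ lives on $m$ qubits, closes the loop:
\[
n\bigl(1-H(1/2+\eps)\bigr)\;\leq\;\sum_{i=1}^{n}I(X_i:Q(X))\;\leq\;I(X:Q(X))\;\leq\;m.
\]

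The main obstacle is not a single hard step but rather carefully transporting the standard classical information-theoretic inequalities (Fano, data processing, chain rule) into the mixed classical-quantum setting and combining them with Holevo's bound; each of these is well-known, so the argument reduces essentially to noticing that independence of the coordinates of $X$ under the uniform distribution lets the per-index bias bound of Fano telescope into a joint bound on the $m$-qubit encoding.
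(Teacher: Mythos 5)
Your argument is correct: Fano's inequality per coordinate, data processing for the measurement, superadditivity of mutual information over the independent uniform bits via the chain rule, and the Holevo bound $I(X:Q(X))\leq S(\Exp_x Q(x))\leq m$ together give $n(1-H(1/2+\eps))\leq m$. Note that the paper does not prove Theorem~\ref{qracbound} at all — it imports it from the cited works of Ambainis--Nayak--Ta-Shma--Vazirani and Nayak — and your derivation is precisely the standard information-theoretic proof of that cited bound, so there is nothing to reconcile beyond observing that your proof matches the source the paper relies on.
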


\section{Pauli decoding from disjoint subsets}

In this section we consider a $(q,c,\eps)$-smooth quantum code $Q$.  Fix a distribution $\mu$ on $\pmset{n}$. We will show that there exists a sequence $S^*\in{\cal P}_m$ such that if the $m$ qubits of $Q(x)$ are measured by the $m$ Pauli measurements in $S^*$, then each $x_i$ can be retrieved by querying only $q$ bits of the $m$-bit measurement outcome $S^*(Q(x))$, in a very structured way. Specifically, we prove:

\begin{theorem}\label{specialS}
Let $Q:\pmset{n}\to\densop{2^m}$ be a $(q,c,\eps)$-smooth quantum code and $\mu$ be a distribution on $\pmset{n}$. Then there exists a sequence $S^*\in{\mathcal P}_m$, and for every $i\in[n]$ a set $M_i$ of at least $\eps m/(qc)$ disjoint sets $r\subseteq[m]$ (each of size at most $q$) with associated signs $a_{i,r}\in\pmset{}$, such that
\beqn
\Exp_{x\sim\mu}\left[\frac{1}{|M_i|}\sum_{r\in M_i}\Pr\big[a_{i,r}\prod_{j\in r}S^*_j(Q(x))=x_i\big]\right]\geq \half + \frac{\eps}{4^{q+1}}.
\eeqn
\end{theorem}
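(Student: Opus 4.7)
The plan is to apply a random sequence of one-qubit Pauli measurements $S^{*}\in\{I,X,Y,Z\}^{m}$ to the $m$ qubits of $Q(x)$, show that some good $S^{*}$ exists, and then extract each $M_{i}$ greedily using smoothness. For every $i$ and every set $r$ queried by the smooth decoder, expand its two-outcome measurement operator in the Pauli basis on $r$,
\[
A_{i,r}=\sum_{P:\,\mathrm{supp}(P)\subseteq r}\alpha_{i,r,P}\,P,
\]
with $|\alpha_{i,r,P}|\leq 1$ since $\|A_{i,r}\|\leq 1$. Writing $\beta_{i}(P):=\Exp_{x\sim\mu}[x_{i}\Tr(PQ(x))]$, the hypothesis that the smooth quantum decoder has $\mu$-bias at least $2\eps$ rewrites as $\sum_{r}p_{i}(r)\sum_{P}\alpha_{i,r,P}\beta_{i}(P)\geq 2\eps$. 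I will assume without loss of generality that every queried $r$ has $|r|=q$, padding if necessary by dummy qubits on which $A_{i,r}$ acts trivially, at a constant-factor cost in the smoothness parameter~$c$.

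Next, draw $S^{*}$ uniformly and let $P^{*}_{r}:=\bigotimes_{j\in r}S^{*}_{j}$ (tensored with identity off $r$). Then $P^{*}_{r}$ is uniformly distributed over the $4^{q}$ Paulis supported in $r$, so for each $(i,r)$,
\[
\Exp_{S^{*}}\!\bigl[\alpha_{i,r,P^{*}_{r}}\,\beta_{i}(P^{*}_{r})\bigr]=\frac{1}{4^{q}}\sum_{P}\alpha_{i,r,P}\beta_{i}(P)=\mathrm{bias}_{i}(r)/4^{q}.
\]
Summing over $r$ weighted by $p_{i}(r)$ and applying $\alpha\beta\leq|\beta|$ (valid since $|\alpha|\leq1$) gives, for \emph{every} index $i$, the per-index expectation bound
\[
\Exp_{S^{*}}[L_{i}(S^{*})]\geq 2\eps/4^{q},\qquad L_{i}(S^{*}):=\sum_{r}p_{i}(r)\,|\beta_{i}(P^{*}_{r})|.
\]

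The main obstacle is upgrading this per-$i$-in-expectation bound to a single $S^{*}$ that is simultaneously good for all $n$ indices. Because the coordinates of $S^{*}$ are independent, $\beta_{i}(P^{*}_{r})$ and $\beta_{i}(P^{*}_{r'})$ are independent whenever $r\cap r'=\emptyset$; combined with $p_{i}(j)\leq c/m$ this yields
\[
\mathrm{Var}(L_{i}(S^{*}))\leq \sum_{j\in[m]}p_{i}(j)^{2}\leq c^{2}/m.
\]
Chebyshev's inequality plus a union bound over $i\in[n]$ then produces a single $S^{*}$ for which $L_{i}(S^{*})\geq \eps/4^{q}$ holds for every $i$ (the theorem is vacuous in the small regime where $\eps m/(qc)<1$).

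Finally, extract each $M_{i}$ greedily from such $S^{*}$. Since $L_{i}(S^{*})\geq\eps/4^{q}$ is a $p_{i}$-weighted average of quantities in $[0,1]$, the $p_{i}$-mass of $\{r:|\beta_{i}(P^{*}_{r})|\geq \eps/(2\cdot 4^{q})\}$ is at least $\eps/(2\cdot 4^{q})$. Iteratively add one such $r$ to $M_{i}$ and delete every $r'$ that meets it; by smoothness each step removes at most $qc/m$ worth of $p_{i}$-mass, so the procedure runs for at least $\eps m/(qc)$ rounds and yields the claimed disjoint family. Taking $a_{i,r}=\mathrm{sign}(\beta_{i}(P^{*}_{r}))$ converts each per-set bias into the stated success-probability bound, and the hard step throughout is the per-$i$ coordination: getting a single $S^{*}$ simultaneously good for all $n$ indices via the variance-based union bound.
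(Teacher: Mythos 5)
Your overall strategy (random one-qubit Pauli sequence, Pauli-basis expansion of the decoder's measurement operators, existence of a good $S^*$, then disjoint sets) is the right one, but the step you yourself flag as the crux --- getting one $S^*$ that is simultaneously good for all $n$ indices --- has a genuine gap. Your $L_i(S^*)$ sums over \emph{all} (generally overlapping) query sets $r$, so the only concentration you can invoke is a second-moment bound: with $\mathrm{Var}(L_i)\leq\sum_j p_i(j)^2\leq qc/m$, Chebyshev gives a per-index failure probability of order $16^q qc/(\eps^2 m)$, and the union bound over $i\in[n]$ then needs $m\gtrsim 16^q qc\,n/\eps^2$. Nothing in the hypotheses supplies this: the only generic length bound available is the random access code bound of Theorem~\ref{qracbound}, $m\geq(1-H(1/2+\eps))n=\Theta(\eps^2)\,n$, which is weaker by a factor of roughly $16^q qc/\eps^4$, and the theorem must hold for every smooth quantum code, including (hypothetical) ones near that bound. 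The paper's proof avoids exactly this problem by reversing your order of steps: it first builds, for each $i$, the disjoint family $M_i$ of at least $\eps m/(qc)$ sets with conditional bias at least $\eps$ (the Katz--Trevisan hypergraph argument, Theorem~\ref{matchinglem}), and only then randomizes over $S\in\mathcal{P}_m$. Disjointness of the $r\in M_i$ makes the per-$r$ Pauli biases independent, so a Chernoff bound gives failure probability $\exp(-\Omega(\eps|M_i|/4^q))$, which is exponentially small in $m$ and beats $1/n$ using only the QRAC bound (for large $n$); a polynomially small Chebyshev tail cannot be rescued this way because your summands for overlapping $r$ are correlated.

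A second, quantitative shortfall: even granting a good $S^*$, your greedy extraction does not produce $M_i$ of the required size. The $p_i$-mass of sets with $|\beta_i(P^*_r)|\geq\eps/(2\cdot 4^q)$ is only guaranteed to be $\eps/(2\cdot 4^q)$, and each greedy step removes at most $qc/m$ mass, so you get about $\eps m/(2\cdot 4^q qc)$ disjoint sets, not the $\eps m/(qc)$ you claim and the theorem asserts (your ``at least $\eps m/(qc)$ rounds'' silently drops the $2\cdot 4^q$ factor). This is not cosmetic: $|M_i|\geq\eps m/(qc)$ is what yields the smoothness parameter $qc/\eps$ in Theorem~\ref{sqc=rsc:thm}, and your construction would degrade it by $2\cdot 4^q$. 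In the paper's ordering the size $\eps m/(qc)$ is secured before any Pauli decomposition, and the $4^q$ loss is paid only in the bias. (Smaller points: the claimed inequality $\mathrm{Var}(L_i)\leq\sum_j p_i(j)^2$ needs the covariance-of-overlapping-pairs argument spelled out, and the padding to $|r|=q$ is unnecessary since uniformity over $4^{|r|}$ Paulis on $r$ already gives the $\eps/4^q$ expectation --- but these are fixable; the union-bound step and the $M_i$ count are the real issues.)
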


The proof consists of two parts. We start by constructing the sets $M_i$ and then we show that decoding $Q$ can be done by using only Pauli measurements. Putting these two observations together enables us to prove Theorem \ref{specialS}.

As an aside, the fact that this theorem works for every distribution $\mu$ allows us to turn smooth codes into schemes for private information retrieval (PIR) that work for every $x\in\pmset{n}$ instead of only on average. We explain this in Appendix \ref{rsctopir}.

\subsection{Decoding from disjoint subsets}

First we construct the large sets $M_i$ of disjoint $q$-sets that enable reasonably good prediction of $x_i$.

\begin{theorem}[modified from Lemma~4 in \cite{katz&trevisan:ldc}]\label{matchinglem}
Let $Q:\pmset{n}\to\densop{2^m}$ be a $(q,c,\eps)$-smooth quantum code with decoder $\mathcal{A}$, and $\mu$ a distribution on $\pmset{n}$. 
Then for every $i\in[n]$ there exists a set $M_i$ of at least $\eps m/(qc)$ disjoint sets $r\subseteq[m]$ (each of size at most $q$) satisfying
$$
\Pr_{x\sim\mu}[\mathcal{A}^{Q(x)}(i)=x_i\mid \mathcal{A}^{(\cdot)}(i) \text{ queries } r]\geq\frac{1}{2} + \frac{\eps}{2}.
$$
\end{theorem}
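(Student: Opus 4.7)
The plan is to follow the Katz--Trevisan style argument, which goes through essentially unchanged for quantum smooth codes because the decoder's choice of the query set $r$ is a classical random process independent of the encoded state. Fix $i\in[n]$, and call a set $r\subseteq[m]$ with $|r|\leq q$ \emph{good} if
\[
\Pr_{x\sim\mu}[\mathcal{A}^{Q(x)}(i)=x_i\mid \mathcal{A}^{(\cdot)}(i)\text{ queries }r]\geq \frac{1}{2}+\frac{\eps}{2},
\]
where the probability is over $x\sim\mu$ and the measurement outcome on the selected qubits. Write $p_i(r):=\Pr[\mathcal{A}^{(\cdot)}(i)\text{ queries }r]$ for the query distribution over sets.

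First I would show that the total query-probability on good sets is at least $\eps$. Smoothness gives $\Pr_{x\sim\mu}[\mathcal{A}^{Q(x)}(i)=x_i]\geq \frac{1}{2}+\eps$, and the left-hand side decomposes as $\sum_r p_i(r)\cdot \Pr[\text{success}\mid r]$. Splitting into good and bad $r$, and using that bad sets have conditional success strictly below $\frac{1}{2}+\frac{\eps}{2}$, we get $\frac{1}{2}+\eps\leq P+(1-P)(\frac{1}{2}+\frac{\eps}{2})$, where $P$ denotes the total $p_i$-mass on good sets. Rearranging yields $P\geq \eps/(1-\eps)\geq \eps$.

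Next I would construct $M_i$ greedily. Initialize $\mathcal{G}$ as the collection of good $r$ with $p_i(r)>0$; then repeatedly pick any $r\in\mathcal{G}$, add it to $M_i$, and remove from $\mathcal{G}$ every set $r'$ that intersects $r$. For each single index $j$, the smoothness condition gives $\sum_{r'\ni j}p_i(r')\leq c/m$, so by the union bound over the at most $q$ indices in $r$, the total $p_i$-mass removed in one step is at most $qc/m$. Since we start from mass at least $\eps$ and lose at most $qc/m$ per step, the process performs at least $\eps m/(qc)$ steps before the good mass is exhausted, and each chosen set is by construction disjoint from all previously chosen ones. The conditional-success bound for every $r\in M_i$ is immediate from the definition of ``good''.

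The only subtlety is bookkeeping the probability spaces --- $x\sim\mu$, the decoder's internal coins (which determine $r$), and the measurement outcome on the chosen qubits --- but once one notes that the quantum decoder is non-adaptive and selects $r$ before ever touching the state, the whole argument reduces to the classical Katz--Trevisan greedy-matching scheme. I therefore do not anticipate any real obstacle; the only mildly delicate point is checking that $P\geq \eps/(1-\eps)$, and even this is an elementary one-line manipulation.
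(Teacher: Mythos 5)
Your proposal is correct and follows essentially the same route as the paper: the same good/bad decomposition showing the query mass on good sets is at least $\eps/(1-\eps)\geq\eps$, followed by a disjoint-family argument in which your greedy process simply constructs the maximal matching the paper uses, with the identical union-bound accounting of at most $qc/m$ mass per chosen set. No gaps; the observation that the non-adaptive quantum decoder's query-set choice is classical and state-independent is exactly what makes the Katz--Trevisan argument carry over, as in the paper.
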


\begin{proof}
Call a set $r\subseteq[m]$ ``good for $i$'' if it satisfies the inequality stated in the theorem.
Define for every $i\in[n]$ a hypergraph $H_i=(V,E_i)$ with vertex-set $V=[m]$ and a set of hyperedges $E_i := \{e\mid e \text{ is good for } i\}$. Say that a smooth quantum decoder $\mathcal{A}$ for $Q$ ``queries $E_i$'' if $\mathcal{A}$ queries an $e\in E_i$. Let $p(e) := \Pr[\mathcal{A}^{(\cdot)}(i) \text{ queries } e]$. Then the probability that this decoder queries $E_i$ is $p(E_i):=\sum_{e\in E_i}p(e)$. For all $e\not\in E_i$ we have
$$
\Pr[\mathcal{A}^{Q(x)}(i) = x_i\mid\mathcal{A}^{(\cdot)}(i) \text{ queries } e] < \frac{1}{2} + \frac{\eps}{2}.
$$
But since for every $x$ and $i$, $\mathcal{A}$ decodes bit $x_i$ with probability at least $1/2+\eps$, we have
$$
\frac{1}{2} + \eps\leq \Pr[\mathcal{A}^{Q(x)}(i)=x_i] < p(E_i) + (1-p(E_i))(\frac{1}{2} + \frac{\eps}{2}) =\frac{1}{2} + \frac{\eps}{2} + p(E_i)(\frac{1}{2} - \frac{\eps}{2}).
$$
Hence $p(E_i) > \eps/(1-\eps) \geq \eps$. 
Since $Q$ is a \emph{smooth} quantum code, we know that the probability that $\mathcal{A}$ queries an index $j$ is $\sum_{e\in E_i|j\in e}p(e) = \Pr[\mathcal{A}^{(\cdot)}(i) \text{ queries } j] \leq c/m$. 

Let $M_i$ be a maximal set of disjoint hyperedges in $H_i$, and define the vertex set $T = \cup_{e\in M_i}e$. Note that $T$ has at most $q|M_i|$ elements and that it intersects each $e\in E_i$ (since otherwise $|M_i|$ would not be maximal). We can now lower bound the size of $M_i$ as follows:
$$
\eps < p(E_i) = \sum_{e\in E_i}p(e) \stackrel{(*)}{\leq} \sum_{j\in T}\sum_{e\in E_i|j\in e}\!p(e) \leq \frac{c|T|}{m}\leq \frac{cq|M_i|}{m},
$$
where $(*)$ holds because each $e\in E_i$ is counted exactly once on the left hand side, and at least once on the right-hand side (since $T$ intersects each $e\in E_i$). Hence $|M_i|> \eps m/(qc)$.
\end{proof}

\subsection{Pauli decoding}\label{paulidecoding}

In the second part of the proof of Theorem~\ref{specialS}, we find the appropriate Pauli measurements. Recall that to decode $x_i$, a smooth quantum decoder first selects a set $r\subseteq[m]$ of at most $q$ indices, and then applies some measurement with operators $A_{i,r}^+,A_{i,r}^-$ to determine its output.  Let $A_{i,r}=A_{i,r}^+-A_{i,r}^-$. Strictly speaking these operators act only on the qubits indexed by $r$, but we can view them as acting on the $m$-qubit state $Q(x)$ by tensoring them with $m-|r|$ identities. The difference between the probabilities of obtaining outcomes $+1$ and $-1$ is $\Tr(A_{i,r}\cdot Q(x))$. For every $i\in[n]$ and $r\in M_i$ we define the following bias:
$$
B(i,r):=\Exp_{x\sim\mu}[\Tr(A_{i,r}\cdot Q(x))\cdot x_i].
$$
This measures how well the measurement outcome is correlated with $x_i$ (with $x$ weighted according to $\mu$).
From Theorem~\ref{matchinglem} we have $B(i,r)\geq \eps$ for every $i\in[n]$ and every $r\in M_i$.

Since ${\cal P}_q$ is a basis for all $2^q\times 2^q$ complex matrices we can write
$$
A_{i,r}=\sum_{S\in{\cal P}_q}\widehat{A_{i,r}}(S) S,
$$
with $\widehat{A_{i,r}}(S):=\langle A_{i,r},S\rangle=\frac{1}{2^q}\Tr(A_{i,r}\cdot S)\in[-1,1]$. We now have:
\begin{equation}\label{eqsumofbiases}
\eps\leq B(i,r)=\sum_{S\in{\cal P}_q}\widehat{A_{i,r}}(S)\Exp_{x\sim\mu}[\Tr(S \cdot Q(x))\cdot x_i]
\leq \sum_{S\in{\cal P}_q}\left|\Exp_{x\sim\mu}[\Tr(S \cdot Q(x))\cdot x_i]\right|.
\end{equation}
Suppose we measure the $r$-qubits of $Q(x)$ with some $S\in{\cal P}_q$ and get outcome $b\in\{\pm 1\}$.
The quantity $\Exp_{x\sim\mu}[\Tr(S \cdot Q(x))\cdot x_i]$ is the difference between $\Pr_{x\sim\mu}[b=x_i]$ and $\Pr_{x\sim\mu}[b\neq x_i]$. If we output $b$ if this difference is nonnegative, and $-b$ otherwise, then we would predict $x_i$ with bias
$$
B'(i,S,r):=\left|\Exp_{x\sim\mu}[\Tr(S \cdot Q(x))\cdot x_i]\right|.
$$
From Equation~(\ref{eqsumofbiases}) we know that this bias is at least $\eps/4^q$ for at least one ``good'' $S\in{\cal P}_q$.
Hence, with some loss in success probability, we can decode $Q$ by only using Pauli measurements.
We now use a probabilistic argument to prove that a good sequence $S^*$ of Pauli measurements exists,
which is simultaneously good, for every $i\in[n]$, for most of the elements $r\in M_i$.

\begin{proof}[ (of Theorem~\ref{specialS})]
Suppose we let $\mathbf{S}\in{\cal P}_q$ be a random variable uniformly distributed over ${\cal P}_q$, and we use it to predict $x_i$ as above.
Then $B'(i,\mathbf{S},r)$ is a random variable in the interval $[0,1]$, with expectation
$$
\Exp_{S\in{\cal P}_q}[B'(i,S,r)]=\frac{1}{4^q}\sum_{S\in{\cal P}_q}\left|\Exp_{x\sim\mu}[\Tr(S\cdot Q(x))\cdot x_i]\right|\geq\frac{\eps}{4^q}.
$$
Now we consider $m$-qubit Pauli measurements and replace all elements not in $r$ with $I$'s:
for $S\in{\cal P}_m$ and $r\subseteq[m]$, let $S_{(r)}$ denote $S$ with all its $m-|r|$ elements outside of $r$ replaced by $I$.
If we let $\mathbf{S}$ be uniform over ${\cal P}_m$, we get biases $B'(i,\mathbf{S}_{(r)},r)$ for each $r\in M_i$,
each in $[0,1]$ and with expectation at least $\eps/4^q$ (over the choice of $S_{(r)}$). But note that the
random variables $B'(i,\mathbf{S}_{(r)},r)$ are independent from each other for different $r\in M_i$, since the
elements of $M_i$ are disjoint. Hence the average bias over all $r\in M_i$,
$$
B'(\mathbf{S},i):=\frac{1}{|M_i|}\sum_{r\in M_i}B'(i,\mathbf{S}_{(r)},r),
$$
is the average of $|M_i|$ independent random variables,
each in $[0,1]$ and with expectation at least $\eps/4^q$.
By a Chernoff bound\footnote{See Equation~(7) in~\cite{hagerup:tour}. A small modification of their proof shows that this bound not only holds for independent 0/1-variables, but also for independent variables in the interval $[0,1]$.} the probability that $B'(\mathbf{S},i)$ is much smaller than its expectation, is small:
$$
\Pr_{S\in{\cal P}_m}\left[B'(S,i)<\frac{1}{2}\frac{\eps}{4^q}\right]\leq \Pr_{S\in{\cal P}_m}\left[B'(S,i)<\frac{1}{2}\Exp[B'(S,i)]\right]\leq \exp\left(-\frac{|M_i|\eps}{8\cdot 4^q}\right).
$$
By Theorems~\ref{qracbound} and~\ref{matchinglem} we may assume $|M_i|>4^q\log(n)/\eps$.
It follows that the above probability is less than $1/n$.
Since this is true for every index $i\in[n]$, the union bound gives
$$
\Pr_{S\in{\cal P}_m}\left[\exists i~s.t.~B'(S,i)<\frac{1}{2}\frac{\eps}{4^q}\right]
\leq\sum_{i=1}^n \Pr_{S\in{\cal P}_m}\left[B'(S,i)<\frac{1}{2}\frac{\eps}{4^q}\right]<1.
$$
We can thus conclude that there exists an $S^*\in{\cal P}_m$ such that for every $i\in[n]$ we have
$$
\frac{1}{|M_i|}\sum_{r\in M_i}B'(i,S^*_{(r)},r)\geq\frac{1}{2}\frac{\eps}{4^q}.
$$
This implies the statement of the theorem.
\end{proof}

\section{Classical codes from quantum codes}

Theorem \ref{specialS} implies that if we measure all $m$ indices of a smooth quantum quantum code $Q$ with the elements of $S^*$, then we get distributions on $\pmset{m}$ that can be massaged to ``codewords'' $R(x)$ of a randomized smooth code:

\begin{theorem}\label{sqc=rsc:thm}
Let $Q:\pmset{n}\to\densop{2^m}$ be a $(q,c,\eps)$-smooth quantum code. Then for every input distribution $\mu$ on $\{\pm1\}^n$, 
there exists a $\mu$-average $(q,qc/\eps,\eps/4^{q+1})$-randomized smooth code $R:\pmset{n}\to\probdist{\pmset{m}}$.
\end{theorem}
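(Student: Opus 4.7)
The plan is to use Theorem~\ref{specialS} essentially as a black box: it already hands us the Pauli sequence $S^*$, the disjoint sets $M_i$, and the signs $a_{i,r}$ we need, so the work is just to package these into a randomized smooth code and verify the three clauses of Definition~\ref{rscode} (in its $\mu$-average version).

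First I would define the code $R$ itself. Given $x\in\pmset{n}$, apply the $m$ single-qubit Pauli measurements specified by $S^*$ to the $m$ qubits of $Q(x)$; this yields a probability distribution on $\pmset{m}$, which I take as $R(x)$. Note that the $j$th bit of a sample from $R(x)$ is exactly the $\pm 1$-outcome of the measurement $S^*_j$ applied to the $j$th qubit of $Q(x)$, and the product of the bits in any subset $r$ equals $\prod_{j\in r}S^*_j(Q(x))$.

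Next I would define the decoder $\mathcal{A}'$. On input $i\in[n]$, pick $r\in M_i$ uniformly at random, query the $|r|\le q$ bits of the oracle at positions in $r$, call them $y_j$ for $j\in r$, and output $a_{i,r}\prod_{j\in r}y_j$. This is clearly non-adaptive and makes at most $q$ queries, giving clause~3. For clause~1 (the $\mu$-average success probability), when the oracle is $R(x)$ the output equals $a_{i,r}\prod_{j\in r}S^*_j(Q(x))$, so
\[
\Exp_{x\sim\mu}\bigl[\Pr[\mathcal{A}'^{R(x)}(i)=x_i]\bigr]
=\Exp_{x\sim\mu}\left[\frac{1}{|M_i|}\sum_{r\in M_i}\Pr\bigl[a_{i,r}\prod_{j\in r}S^*_j(Q(x))=x_i\bigr]\right]\ge\frac{1}{2}+\frac{\eps}{4^{q+1}},
\]
which is exactly the conclusion of Theorem~\ref{specialS}.

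For clause~2 (smoothness), the key observation is that the sets in $M_i$ are pairwise disjoint, so each index $j\in[m]$ lies in at most one $r\in M_i$. Hence the probability that $\mathcal{A}'$ on input $i$ queries index $j$ is at most $1/|M_i|$, and using $|M_i|\ge\eps m/(qc)$ from Theorem~\ref{specialS} this is at most $qc/(\eps m)$, matching the required smoothness parameter $c'=qc/\eps$.

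I don't expect a real obstacle here, since Theorem~\ref{specialS} has done the heavy lifting; the only things to be careful about are (i) that measuring each qubit separately with the corresponding Pauli really does produce a classical distribution whose marginals and bit-products behave as claimed (this was noted in the Pauli-matrix paragraph of the preliminaries), and (ii) tracking the disjointness of the $M_i$ to get the smoothness constant right rather than something worse like $q^2c/\eps$.
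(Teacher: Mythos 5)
Your proposal is correct and follows essentially the same route as the paper's proof: define $R(x)$ by measuring $Q(x)$ with the Pauli sequence $S^*$ from Theorem~\ref{specialS}, let the decoder pick $r\in M_i$ uniformly and output $a_{i,r}\prod_{j\in r}y_j$, and derive smoothness from the disjointness of $M_i$ together with $|M_i|\geq\eps m/(qc)$. You simply spell out the verification that the paper calls straightforward.
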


\begin{proof}
We use Theorem~\ref{specialS}.
Let $R(x)$ be the distribution on $\pmset{m}$ obtained by measuring $Q(x)$ with $S^*$.
We define a decoder $\mathcal{A}$ for $R$ as follows: on input $i\in[m]$ and oracle $y\in\pmset{m}$,
pick a set $r$ from the set $M_i$ uniformly at random, and return $a_{i,r}\prod_{j\in r}y_j$.
It is straightforward to check that $\mathcal{A}$ is a $\mu$-average $(q,qc/\eps,\eps/4^{q+1})$ decoder for $R$;
in particular, since $\mathcal{A}$ picks $r$ uniformly from a set of at least $\eps m/(qc)$ disjoint sets,
each index $j\in[m]$ has probability at most $qc/(\eps m)$ of being queried.
\end{proof}

Combining Lemma~\ref{rsc=sc:lem} and Theorem~\ref{sqc=rsc:thm}, we immediately get the following ``derandomization'':

\begin{corollary}\label{sqc=sc:cor}
Let $Q:\pmset{n}\to\densop{2^m}$ be a $(q,c,\eps)$-smooth quantum code. Then for every distribution $\mu$ on $\pmset{n}$,
there exists a $C:\pmset{n}\to\pmset{m}$ which is a $\mu$-average $(q,qc/\eps,\eps/(2\cdot 4^{q+1}))$-smooth code for at least $\eps n/4^{q+1}$ of the $n$ indices.
\end{corollary}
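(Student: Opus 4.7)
The plan is to derive the corollary by simply chaining together the two previous results: Theorem~\ref{sqc=rsc:thm} first converts the smooth quantum code $Q$ into a $\mu$-average randomized smooth classical code $R$ of the same length $m$, and then Lemma~\ref{rsc=sc:lem} derandomizes $R$ into a deterministic $\mu$-average smooth code $C$. Since both inputs of the chain are written in the right form, no new idea is needed; the only work is tracking how the parameters transform.

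First, I would invoke Theorem~\ref{sqc=rsc:thm} on the given $(q,c,\eps)$-smooth quantum code $Q$ and the chosen distribution $\mu$. This produces a $\mu$-average $(q,\,qc/\eps,\,\eps/4^{q+1})$-randomized smooth code $R:\pmset{n}\to\probdist{\pmset{m}}$. In particular, the length is preserved, the query complexity is still $q$, the smoothness parameter becomes $c':=qc/\eps$, and the success bias drops from $\eps$ to $\eps':=\eps/4^{q+1}$.

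Next, I would feed this $R$ into Lemma~\ref{rsc=sc:lem} with parameters $(q,c',\eps')=(q,\,qc/\eps,\,\eps/4^{q+1})$. The lemma then yields a deterministic function $C:\pmset{n}\to\pmset{m}$ such that, for at least $\eps' n = \eps n/4^{q+1}$ of the indices $i\in[n]$, $C$ is a $\mu$-average $(q,c',\eps'/2)$-smooth code; substituting the values back gives exactly a $(q,\,qc/\eps,\,\eps/(2\cdot 4^{q+1}))$-smooth code on at least $\eps n/4^{q+1}$ of the indices, matching the statement. The decoder for $C$ is the one guaranteed by Lemma~\ref{rsc=sc:lem}, namely a decoder of $R$ applied to the derandomized codeword $R(\cdot,z)$ for the specific $z\in\mathcal{Z}$ extracted in the averaging argument of that lemma; smoothness is inherited coordinate-wise because derandomization only fixes the internal randomness of $R$ and does not change which positions the decoder queries.

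There is no real obstacle here beyond making sure the parameter substitution is done correctly; in particular, one should note that the smoothness bound $c'=qc/\eps$ and the set-of-indices cardinality $\eps n/4^{q+1}$ both come for free out of the two cited statements, without any further probabilistic argument. The only thing worth emphasizing is that the ``at least $\eps n/4^{q+1}$ of the indices'' clause is a genuine weakening inherited from Lemma~\ref{rsc=sc:lem} (one cannot recover all $n$ indices by derandomizing), which is precisely why the corollary is stated only for a constant fraction of coordinates rather than for every $i\in[n]$.
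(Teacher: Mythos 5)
Your proposal is correct and is exactly the paper's argument: the corollary is obtained by composing Theorem~\ref{sqc=rsc:thm} with Lemma~\ref{rsc=sc:lem}, and your parameter bookkeeping ($c'=qc/\eps$, $\eps'=\eps/4^{q+1}$, hence bias $\eps/(2\cdot 4^{q+1})$ on at least $\eps n/4^{q+1}$ indices) matches the statement. Nothing further is needed.
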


Following the path through Theorems~\ref{ldqctosqc}, \ref{sqc=rsc:thm}, and the $\mu$-average version of Theorem~\ref{sqctoldqc}, 
we can turn an LDQC into a $\mu$-average randomized LDC:

\begin{corollary}\label{ldqc=rldc:cor}
Let $Q:\pmset{n}\to\densop{2^m}$ be a $(q,\delta,\eps)$-locally decodable quantum code.
Then, as long as $\delta'\leq \delta\eps^2/(q^2 4^{q+1})$, for every distribution $\mu$ over $\pmset{n}$, 
there exists an $R:\pmset{n}\to\probdist{\pmset{m}}$ which is a $\mu$-average $(q,\delta',\eps/4^{q+1} - \delta' q^2/(\delta\eps))$-randomized locally decodable code.
\end{corollary}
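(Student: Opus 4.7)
The plan is to chain together the three results indicated in the corollary's setup, tracking the parameters carefully at each step. The argument is purely a composition — no new ideas are needed, and the interesting work has already been done in Theorem~\ref{specialS} and Theorem~\ref{sqc=rsc:thm}.

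First, I would apply Theorem~\ref{ldqctosqc} to the given $(q,\delta,\eps)$-LDQC $Q$, obtaining that the same $Q$ is also a $(q,q/\delta,\eps)$-smooth quantum code. That is, the smoothness parameter becomes $c = q/\delta$, while $q$ and $\eps$ are preserved. Second, I would apply Theorem~\ref{sqc=rsc:thm} to this smooth quantum code with the given distribution $\mu$. This produces a $\mu$-average randomized smooth code $R:\pmset{n}\to\probdist{\pmset{m}}$ with parameters $(q,qc/\eps,\eps/4^{q+1}) = (q, q^2/(\delta\eps),\eps/4^{q+1})$.

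Third, I would invoke the $\mu$-average analogue of Theorem~\ref{sqctoldqc} (promised in the paragraph just before Theorem~\ref{sqctoldqc}) to turn this randomized smooth code back into a $\mu$-average randomized LDC. The proof of that analogue is identical to the proof given for Theorem~\ref{sqctoldqc}: if the decoder queries any single index with probability at most $c'/m$ where $c' = q^2/(\delta\eps)$, then under any error pattern $E$ with at most $\delta' m$ negative entries, the union bound shows that the probability $\mathcal{A}$ touches a corrupted bit is at most $\delta' c'$, so the $\mu$-average success bias degrades by at most $\delta' c' = \delta' q^2/(\delta\eps)$. Thus the new bias is $\eps/4^{q+1} - \delta' q^2/(\delta\eps)$, and the constraint $\delta'\leq \eps/(4^{q+1} c') = \delta\eps^2/(q^2 4^{q+1})$ ensures this bias is nonnegative.

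There is no real obstacle here; the only small bookkeeping point is verifying that the $\mu$-average version of Theorem~\ref{sqctoldqc} holds by the same argument. This follows because the union bound on queried-but-corrupted positions is independent of the input $x$, so averaging both sides over $x\sim\mu$ preserves the inequality. Composing the three bounds yields exactly the claimed parameters.
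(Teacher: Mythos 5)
Your composition is exactly the paper's intended argument: the paper proves this corollary by the one-line chain through Theorem~\ref{ldqctosqc}, Theorem~\ref{sqc=rsc:thm}, and the $\mu$-average version of Theorem~\ref{sqctoldqc}, and your parameter bookkeeping ($c=q/\delta$, then $c'=q^2/(\delta\eps)$, then bias loss $\delta'c'$) matches the stated parameters. The justification you give for the $\mu$-average analogue of Theorem~\ref{sqctoldqc} (the union bound over corrupted positions is independent of $x$, so it survives averaging over $x\sim\mu$) is precisely the point the paper leaves implicit, so your proof is correct and essentially identical to theirs.
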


Going through Theorem~\ref{ldqctosqc}, Corollary~\ref{sqc=sc:cor}, and the $\mu$-average version of Theorem~\ref{sqctoldqc} instead, 
we can also turn an LDQC into a $\mu$-average LDC:

\begin{corollary}\label{ldqc=ldc:cor}
Let $Q:\pmset{n}\to\densop{2^m}$ be a $(q,\delta,\eps)$-locally decodable quantum code.
Then, as long as $\delta'\leq\delta\eps^2/(2 q^2 4^{q+1})$, for every distribution $\mu$ over $\pmset{n}$,
there exists a $C:\pmset{n}\to\pmset{m}$ which is a $\mu$-average 
$(q,\delta',\eps/(2\cdot 4^{q+1}) - \delta' q^2/(\delta\eps))$-locally decodable code for at least $\eps n/4^{q+1}$ of the $n$ indices.
\end{corollary}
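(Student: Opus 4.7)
The plan is a straightforward three-step chain through the results already established, tracking the parameters carefully.

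First, apply Theorem~\ref{ldqctosqc} to $Q$: since $Q$ is a $(q,\delta,\eps)$-locally decodable quantum code, it is also a $(q,c,\eps)$-smooth quantum code with $c=q/\delta$. Next, apply Corollary~\ref{sqc=sc:cor} to this smooth quantum code with the given distribution $\mu$: this produces a deterministic classical code $C:\pmset{n}\to\pmset{m}$ which is a $\mu$-average $(q,c',\eps')$-smooth code for at least $\eps n/4^{q+1}$ of the indices $i\in[n]$, with smoothness parameter
\[
c'=\frac{qc}{\eps}=\frac{q^2}{\delta\eps}
\]
and bias $\eps'=\eps/(2\cdot 4^{q+1})$.

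Finally, invoke the $\mu$-average version of Theorem~\ref{sqctoldqc} to turn this $\mu$-average smooth code $C$ into a $\mu$-average LDC. That theorem (for the classical $\mu$-average setting, proved by exactly the same union-bound argument as the quantum version) states that provided $\delta'\leq\eps'/c'$, the code $C$ is also a $\mu$-average $(q,\delta',\eps'-\delta' c')$-locally decodable code, on the same set of indices on which it was smoothly decodable. Substituting the values of $c'$ and $\eps'$ computed above, the constraint $\delta'\leq\eps'/c'$ becomes
\[
\delta'\leq \frac{\eps/(2\cdot 4^{q+1})}{q^2/(\delta\eps)}=\frac{\delta\eps^2}{2q^2\cdot 4^{q+1}},
\]
which is exactly the hypothesis. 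The resulting bias is
\[
\eps'-\delta' c'=\frac{\eps}{2\cdot 4^{q+1}}-\delta'\cdot\frac{q^2}{\delta\eps},
\]
as claimed. The set of ``good'' indices is inherited unchanged from Corollary~\ref{sqc=sc:cor}, so it still has size at least $\eps n/4^{q+1}$.

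There is no real obstacle; the only subtlety is keeping the two deteriorations (loss of indices from the derandomization step, and loss of bias from the noise step) independent so that they compose cleanly, which they do because the noise step of Theorem~\ref{sqctoldqc} operates pointwise per index and thus preserves the set of indices that were successfully decoded after Corollary~\ref{sqc=sc:cor}.
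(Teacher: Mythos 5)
Your proof is correct and follows exactly the route the paper itself indicates (Theorem~\ref{ldqctosqc}, then Corollary~\ref{sqc=sc:cor}, then the $\mu$-average version of Theorem~\ref{sqctoldqc}), with the parameters $c=q/\delta$, $c'=q^2/(\delta\eps)$, $\eps'=\eps/(2\cdot 4^{q+1})$ tracked correctly so that the constraint $\delta'\leq\eps'/c'$ and the final bias $\eps'-\delta'c'$ match the statement. Your remark that the union-bound noise argument works per index and under the $\mu$-average guarantee, hence preserves the set of at least $\eps n/4^{q+1}$ good indices, is exactly the observation the paper leaves implicit.
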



\section{Conclusion and open problems}

We defined quantum generalizations of $q$-query locally decodable codes in which $q$ queries correspond to a measurement on $q$ qubits of the $m$-qubit codeword. By a reduction to (classical) randomized smooth codes through a special sequence of Pauli measurements on an LDQC, we showed that the use of quantum systems for this type of encoding can not provide much advantage in terms of length, at least for small $q$. An obvious open problem is reducing the gap between upper and lower bound on the length $m$ of LDCs for fixed small number of queries $q$. Our results show that an upper bound for LDQCs would carry over to ($\mu$-average) LDCs.  This might perhaps be a way to improve the best known classical upper bounds on $m$.

\subsection*{Acknowledgments}
We thank Harry Buhrman, Peter H{\o}yer, Oded Regev, Falk Unger and Stephanie Wehner for useful discussions. JB is especially indebted to Peter H\o yer for suggesting this problem while being his guest at the University of Calgary, Institute for Quantum Information Science.

\bibliographystyle{alpha}

\appendix

\section{PIR schemes from LDCs}\label{rsctopir}

Katz and Trevisan~\cite{katz&trevisan:ldc} showed that LDCs are closely related to so-called private information retrieval (PIR) schemes,
first introduced by Chor et al.~\cite{cgks:pir}. In a PIR scheme,
$q$ non-communicating ``servers'' each hold a copy of the same database $x\in\pmset{n}$.
A ``user'' interacts (usually in only one round of communication) with these servers to retrieve the $i$'th bit
$x_i$ while preserving privacy:
individually, the servers should get no information whatsoever about which index $i$ the user is interested in.
The resource to be minimized is the amount of communication between user and servers.

Katz and Trevisan observed that a smooth code implies a PIR scheme where the user has good recovery probability \emph{on average}.
Specifically, Theorem~\ref{matchinglem} gives a $(q,q,\eps^2/(2c))$-smooth decoder as follows.
We complete the set $M_i$ to a set $M'_i$ of exactly $m/q$ disjoint $q$-tuples (assume for simplicity that $q$ divides $m$).
Now the decoder uniformly picks an $r\in M_i'$
and queries those $q$ indices. If $r$ contains an element of $M_i$ (which happens with probability at least $\eps/c$)
then the decoder proceeds as before, predicting $x_i$ with probability at least $1/2+\eps/2$ (under $\mu$); otherwise the decoder outputs a fair coin flip.
Note that for each $i\in[n]$, the overall success probability (under $\mu$) is at least $(\eps/c)(1/2+\eps/2)+(1-\eps/c)/2=1/2+\eps^2/(2c)$.
Also, each index $j\in[m]$ is queried with probability exactly $q/m$.
Thus we have a $\mu$-average $(q,q,\eps^2/(2c))$-smooth code.

This in turn gives a PIR scheme with good success probability under $\mu$: the user just sends one query to each of the servers,
the servers return the requested bit of the code, and the user gives the same output as the code's decoder.  Since each query individually is uniformly distributed, no information about $i$ will be leaked to individual servers.\footnote{The same argument works to derive PIR schemes from codes over a non-binary alphabet, where the servers' answers are more than one bit. Conversely, one can get a smooth code from a one-round PIR scheme by, roughly speaking, concatenating all answers of the $q$ servers to all possible messages that the user can send them.}

However, we can actually show that there exists a $(q,q,\eps^2/(2c))$-smooth decoder that can decode any bit $x_i$ \emph{for every} 
$x\in\pmset{n}$, hence giving true PIR schemes that work for every database instead of only on $\mu$-average.

\begin{theorem}
Let $C:\pmset{n}\to\pmset{m}$ be a $(q,c,\eps)$-smooth code. Then there exists a $(q,q,\eps^2/(2c)$)-smooth decoder for $C$.
\end{theorem}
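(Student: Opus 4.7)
The plan is to apply Yao's minimax principle to boost the $\mu$-average construction described immediately before the theorem into a worst-case guarantee. Recall that for every distribution $\mu$ on $\pmset{n}$, that construction produces a $(q,q)$-smooth decoder $\mathcal{A}_\mu$ whose $\mu$-average success probability is at least $1/2+\eps^2/(2c)$ for every $i$: Theorem~\ref{matchinglem} (whose proof applies to classical smooth codes, which are a special case of the quantum setting) yields a set $M_i^\mu$ of at least $\eps m/(qc)$ disjoint $q$-tuples that decode $x_i$ on $\mu$-average with conditional probability $\ge 1/2+\eps/2$; completing each $M_i^\mu$ to a collection $M_i'$ of $m/q$ disjoint $q$-tuples covering $[m]$ and sampling $r\in M_i'$ uniformly on input $i$ yields the decoder.

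Set up the zero-sum game in which the row player chooses a (randomized) $(q,q)$-smooth decoder for $C$ and the column player chooses a pair $(i,x)\in[n]\times\pmset{n}$, with payoff $2\Pr[\mathcal{A}^{C(x)}(i)=x_i]-1$. A decoder can be specified, for each input $i$, by a distribution over pairs (query set $r\subseteq[m]$, $|r|\le q$; output function $f_{i,r}\colon\pmset{q}\to\pmset{}$), so the set of $(q,q)$-smooth decoders is a convex compact subset of a finite-dimensional space (smoothness being the linear constraint $\Pr[r\ni j\mid i]\le q/m$ for every $i,j$). The payoff is bilinear and the column player's strategy space is finite, so von Neumann's minimax theorem gives
\[
\max_{\mathcal{A}}\min_{(i,x)}\bigl(2\Pr[\mathcal{A}^{C(x)}(i)=x_i]-1\bigr)\;=\;\min_{\nu}\max_{\mathcal{A}}\Exp_{(i,x)\sim\nu}\bigl[2\Pr[\mathcal{A}^{C(x)}(i)=x_i]-1\bigr].
\]
It therefore suffices to exhibit, for every distribution $\nu$ on $[n]\times\pmset{n}$, a $(q,q)$-smooth decoder whose $\nu$-average bias is at least $\eps^2/c$.

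Given such $\nu$, decompose it as $\nu(i,x)=\alpha(i)\mu_i(x)$, where $\alpha$ is the marginal on $[n]$ and $\mu_i$ the conditional on $\pmset{n}$ given~$i$. For each $i$ separately, run the $\mu$-average construction above with $\mu:=\mu_i$, producing an $i$-specific sub-decoder whose $\mu_i$-average success probability is at least $1/2+\eps^2/(2c)$ and whose query probability on each index $j$ on input $i$ is exactly $q/m$. Let $\mathcal{A}_\nu$ be the decoder that, on input $i$, runs the $i$-th sub-decoder; it is $(q,q)$-smooth by construction, and its $\nu$-average success probability is $\sum_i\alpha(i)\Exp_{x\sim\mu_i}\Pr[\mathcal{A}_\nu^{C(x)}(i)=x_i]\ge 1/2+\eps^2/(2c)$, i.e., bias at least $\eps^2/c$, as required. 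The main obstacle is really only the modest bookkeeping of checking the hypotheses of the minimax theorem (convex compactness of the decoder polytope, bilinearity of the payoff) and observing that the per-$i$ guarantees from the $\mu$-average construction combine cleanly because $M_i^{\mu_i}$ depends only on $i$ and $\mu_i$; once that is in place, the rest is exactly the $\mu$-average construction, invoked once for each conditional $\mu_i$.
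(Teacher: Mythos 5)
Your proof is correct and takes essentially the same route as the paper: Theorem~\ref{matchinglem} supplies, for every distribution, the $\mu$-average decoder obtained by completing $M_i$ to $M_i'$, and Yao's minimax principle converts this into a worst-case guarantee. The only cosmetic difference is that you play a single game in which the adversary picks $(i,x)$ jointly and your strategy set is the polytope of $(q,q)$-smooth decoders, whereas the paper fixes each $i$ and plays a separate game over pairs $(M,F)$, producing decoders $\mathcal{B}(1),\dots,\mathcal{B}(n)$.
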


\begin{proof}

Fix an $i\in[n]$.
We will show that there exists a decoder $\mathcal{B}(i)$ such that for all $x\in\pmset{n}$
\beqn
\Pr[\mathcal{B}^{C(x)}(i)=x_i]\geq \frac{1}{2} + \frac{\eps^2}{2c}.
\eeqn
Consider all possible pairs $(M,F)$, where $M$ is a set of at least $\eps m/(qc)$ disjoint sets $r\subseteq[m]$, each of size at most $q$, 
and $F$ contains one Boolean function $f_r$ for each $r\in M$.
Define a decoder $\mathcal{A}_{(M,F)}$ that decodes according to this pair,
i.e., it queries a uniformly random $r\in M$ and applies $f_r\in F$ to the results.
Define a matrix $P$, with rows indexed by all $x$ and columns by pairs $(M,F)$:
$$
P_{x,(M,F)}=\Pr[\mathcal{A}_{(M,F)}^{C(x)}=x_i].
$$
Theorem \ref{matchinglem} says that for every distribution $\mu$ over $\pmset{n}$, there exists a column of $P$
(i.e., an $(M,F)$ pair) with $\mu$-average at least $1/2+\eps/2$.

For each $x$ define $2^n$-dimensional 0/1-vector $e_x$ with a 1 only at position $x$, and similarly define 0/1-vector $u_{(M,F)}$.
For probability distributions $\mu$ (on the set of all $x$) and $\nu$ (on the set of all pairs $(M,F)$), 
define vectors $e_{\mu} = \sum_{x\in\pmset{n}}\mu(x)e_x$ and $u_{\nu} = \sum_{(M,F)}\nu((M,F))u_{(M,F)}$.
Then Yao's principle (i.e., the minimax theorem as used in~\cite{yao:unified}) gives us
\beqrn
\frac{1}{2} + \frac{\eps}{2} &\leq& \min_{\mu}\max_{(M,F)} e_{\mu}^T P u_{(M,F)} =\max_{\nu}\min_x e^T_x P u_{\nu}.
\eeqrn
Let $\nu$ be a distribution that maximizes the right-hand side.
Let $\mathcal{B}(i)$ select a pair $(M,F)$ according to distribution $\nu$, complete $M$ to some $M'$, and use a uniformly chosen element of $M'_i$ to predict~$x_i$, as explained before the theorem.  Then $\mathcal{B}(i)$ queries every index with probability exactly $q/m$, and satisfies
$$
\min_{x\in\pmset{n}}\Pr[\mathcal{B}^{C(x)}(i)=x_i]\geq \frac{1}{2}+\frac{\eps^2}{2c}.
$$
Hence the algorithms $\mathcal{B}(1),\ldots,\mathcal{B}(n)$ form a $(q,q,\eps^2/(2c))$-smooth decoder for $C$.
\end{proof}

\end{document}